\newcommand{\ket}[1]{\ensuremath{\left|{#1}\right\rangle}}
\newcommand{\bra}[1]{\ensuremath{\left\langle{#1}\right |}}
\newtheorem*{theorem*}{Theorem}
\begin{document}

\title{Global-to-local incompatibility, monogamy of entanglement, and ground-state dimerization: \\
       Theory and observability of quantum frustration in systems with competing interactions}

\author{S. M. Giampaolo}
\affiliation{Dipartimento di Ingegneria Industriale, Universit\`a degli Studi di Salerno,
Via Giovanni Paolo II 132, I-84084 Fisciano (SA), Italy}
\affiliation{University of Vienna, Faculty of Physics, Boltzmanngasse 5, 1090 Vienna, Austria}

\author{B. C. Hiesmayr}
\affiliation{University of Vienna, Faculty of Physics, Boltzmanngasse 5, 1090 Vienna, Austria}

\author{F. Illuminati}
\thanks{Corresponding author: fabrizio.illuminati@gmail.com}
\affiliation{Dipartimento di Ingegneria Industriale, Universit\`a degli Studi di Salerno,
Via Giovanni Paolo II 132, I-84084 Fisciano (SA), Italy}
\affiliation{CNISM Unit\`a di Salerno, I-84084 Fisciano (SA), Italy}
\affiliation{INFN, Sezione di Napoli, Gruppo collegato di Salerno, I-84084 Fisciano
(SA), Italy}

\date{September 17, 2015}

\begin{abstract}
Frustration in quantum many body systems is quantified by the degree of incompatibility between the local and global orders associated,
respectively, to the ground states of the local interaction terms and the global ground state of the total many-body Hamiltonian. This universal
measure is bounded from below by the ground-state bipartite block entanglement. For many-body Hamiltonians that are sums of two-body interaction
terms, a further inequality relates quantum frustration to the pairwise entanglement between the constituents of the local interaction terms.
This additional bound is a consequence of the limits imposed by monogamy on entanglement shareability. We investigate the behavior of local pair
frustration in quantum spin models with competing interactions on different length scales and show that valence bond solids associated to
exact ground-state dimerization correspond to a transition from generic frustration, i.e. geometric, common to classical and quantum systems alike,
to genuine quantum frustration, i.e. solely due to the non-commutativity of the different local interaction terms. We discuss how such frustration
transitions separating genuinely quantum orders from classical-like ones are detected by observable quantities such as the static structure factor
and the interferometric visibility.
\end{abstract}

\pacs{03.65.Ud, 03.67.Mn, 05.30.Rt, 75.10.Pq}

\maketitle

\section{Introduction}

Many body systems are typically modeled by global Hamiltonians that are sums of local interaction terms. Frustration characterizes those situations
for which it is impossible to satisfy simultaneously the minimization of all local interaction energies in the presence of the global constraint
that imposes minimization of the total many-body energy~\cite{Toulouse1977,Kirkpatrick1977,Villain1977,Binder1986,Mezard1987,Lacroix2011,Diep2013}.

In interacting many-body systems frustration may be due either to complex geometrical configurations and/or competing interactions. These types of
frustration are {\em generic} in that they are common to both classical and quantum systems alike. On the other hand quantum systems, due to the
non-commutativity of different local Hamiltonians, may feature further, {\em genuine quantum}, sources of frustration with no classical
counterpart~\cite{Nielsen2004,Acin2007,Giampaolo2010}. In particular, quantum counterparts of classically unfrustrated systems can be
frustrated~\cite{Wolf2003,Nielsen2004,Giampaolo2010,Eisert2010,Schnack2010,Lacroix2011,Diep2013}. Therefore, for many-body quantum systems, a
universal measure of total frustration encompassing all possible sources, generic and specifically quantum, should quantify the degree of
incompatibility between global and local ground states associated, respectively, to the total many-body Hamiltonian and to the local interaction
terms.

Such a measure has recently been introduced, in Refs.~\cite{Giampaolo2011,Marzolino2013}, exactly in terms of the overlap between the ground
states of local interactions and the local reduced states obtained by partial trace from the global ground state of the total many-body
Hamiltonian. It provides a unified treatment of both generic and specifically quantum contributions to the total frustration of local interaction
terms. Quantum frustration is then singled out by the existence of a quantum lower bound to the total frustration that is realized in terms of the
ground-state entanglement between the local interacting subsystem and the remainder of the total many-body system.

In the present work we first show how frustration relates not only to the global block entanglement across a bipartition of the total many-body
system but also to the local entanglement among the constituents of the individual interaction terms. Next, we show how these relations rule
quantitatively the onset of ground-state dimerization and the formation of valence bond solids. For many-body Hamiltonians sums of two-body terms,
the local interacting pairs are naturally identified as the elementary subsystems. We deduce a direct relation between quantum frustration of local
pairs and the pair entanglement as measured by the concurrence~\cite{Hill1997,Wootters1998}. Herewith we are able to establish a relation between
quantum frustration and entanglement monogamy~\cite{Coffman2000,Osborne2006} which imposes a strong constraint, without classical counterpart, on
the shareability of quantum correlations.

Specifically, we investigate the interplay between generic and quantum frustration for an ample class of spin--$1/2$ models with nearest-neighbor
(NN) and next-to-nearest-neighbor (NNN) interactions. These models feature both quantum frustration due to the non-commutativity of the different
local interaction terms and generic frustration due to competing interactions on different length scales. We show that ground-state dimerization in
these systems is realized exactly at the transition from generic to quantum frustration, i.e. at values of the Hamiltonian parameters for which the
total frustration becomes entirely due to the ground-state block entanglement.

Remarkably, such models are amenable to quantum simulation with systems of trapped
ions~\cite{trappedions1,trappedions2,trappedions3,trappedions4,trappedions5}, ultracold atoms in optical
lattices~\cite{opticallattice1,opticallattice2,opticallattice3,opticallattice4}, coupled cavity arrays~\cite{Angelakis2007,Plenio2007},
superconducting qubits~\cite{Houk2012,Zippilli2014}, or nuclear magnetic resonance simulators~\cite{NMR1,NMR2}.
Therefore, as we will show in the following, changes in the ground-state patterns caused by changes in the frustration properties can be
experimentally observed, e.g. in optical-lattice realizations, by measuring the quasi momentum distribution as quantified by the static structure
factor~\cite{Gerbier2005,Hoffmann2009}, or the interferometric visibility~\cite{Greiner2014}.

The paper is organized as follows. In Section~\ref{Sec_frustration} we introduce the universal measure of total frustration in terms of the
global-to-local infidelity and describe its main features. In Section~\ref{Sec:entanglement} we consider generic spin--1/2 models with pairwise
interactions on different length scales, discuss their frustration properties according to the degeneracy of the local ground spaces, establish
rigorous conditions for the presence of quantum frustration, and derive an exact inequality on pairwise frustration based on entanglement monogamy.
In Section~\ref{Sec:transition} we use these results to investigate the transition from generic frustration to genuine quantum frustration in models
with competing nearest-neighbor (NN) and next-to-nearest-neighbor (NNN) interactions, and identify different phases associated, respectively,
to generic frustration and to genuine quantum frustration. In Section~\ref{Sec:observability} we discuss the observable characterization of such
frustration transitions in terms of the static structure factor and the direct observation of the universal measure of total frustration in terms
of the interferometric visibility. Summary and outlook are discussed in Section~\ref{Sec:conclusion}.

\section{Frustration}
\label{Sec_frustration}

Here we review the main aspects of a universal measure of total frustration expressed in terms of the incompatibility between local and global
orders, and briefly discuss how this universal measure (global-to-local incompatibility, or global-to-local infidelity) connects to
quantum entanglement. The class of many-body Hamiltonians
$H_{G}$ under investigation comprises all those that can be written as sums of different local terms: $H_{G}=\sum_{\ell} h_{\ell}$, where $G$ stands
for the global (total) system and $\ell \subset G$ are the local subsystems associated to the local interactions $h_{\ell}$. Let us consider one
such local subsystem $\ell$. Denoting the remainder of the total system by $R = G \setminus \ell$, frustration of $\ell$ occurs when the local
projection
of the global ground state of $H_{G}$, obtained by partial trace over $R$, i.e. the reduced state of subsystem $\ell$, fails to belong to the local
ground state space of $h_{\ell}$. Consequently, such frustration is directly quantified in terms of the overlap between the projector $\Pi_{\ell}$
onto the local ground state space of the local Hamiltonian $h_{\ell}$ and the reduced local density matrix $\rho_{\ell}=Tr_R\{ \rho_G \}$ from the
global ground state $\rho_G$ of the total Hamiltonian $H_{G}$. The ensuing universal measure of total frustration $f_{\ell}$ is thus defined as
follows~\cite{Giampaolo2011}:
\begin{equation}
\label{def_frustratio}
f_{\ell} = 1 - \mathrm{Tr} \left\lbrace(\Pi_{\ell}\otimes \mathbbm{1}_R) \, \rho_G\right\rbrace
= 1 - \mathrm{Tr} \left\lbrace\Pi_{\ell}\, \rho_{\ell} \right\rbrace \, .
\end{equation}
The measure is uniquely defined as long as $G$ features a non degenerate global ground state space, while all local subsystems $\ell$ may feature
local ground spaces of arbitrary degeneracy.

If the global ground state is degenerate, then the measure can in general depend on which global ground state is actually considered. Uniqueness is
guaranteed by considering the {\em a priori} equiprobable statistical average, i.e. the convex combination of all possible degenerate global ground
states with equal weights or, in other words, the maximally mixed global ground state (MMGGS)~\cite{Marzolino2013}.

Ground-state degeneracy in the presence of
ordered phases with nonvanishing local order parameters is tied to spontaneous breaking of some symmetries of the total Hamiltonian, while, by
definition, the MMGGS preserves all symmetries of the model. Choosing a particular global ground state among the degenerate ones
or the MMGGS depends on whether one is interested in features of the frustration that are, respectively, either state- or model-dependent. In the
following we will study frustration in different quantum spin systems featuring both non-degenerate and degenerate global ground states.

Majorization theory and the Cauchy interlacing theorem provide an immediate link between frustration and entanglement. Indeed, it is
straightforward to show that $f_{\ell}$ is bounded from below by a quantity $\epsilon^{(r)}_{\ell}$ that depends uniquely on the reduced local
density matrix $\rho_{\ell}$~\cite{Giampaolo2011,Marzolino2013}:
\begin{equation}\label{inequality}
f_{\ell} \; \geq \; \epsilon^{(r)}_{\ell} \; ,
\end{equation}
\noindent where
\begin{equation}
\label{epsilon}
\epsilon^{(r)}_{\ell} = 1 - \sum_{i=1}^r \lambda_{i}^\downarrow \; .
\end{equation}
The index $r$ denotes the rank of the local ground space projector $\Pi_{\ell}$ and $\lambda_{i}^\downarrow$ are the the first $r$ largest
eigenvalues of the reduced local density matrix $\rho_{\ell}$ in descending order.

In the case of a non-degenerate global ground state $\rho_G$ the quantity $\epsilon^{(r)}_{\ell}$ coincides with the Hilbert-Schmidt distance
between $\rho_G$ and the set of pure states with Schmidt rank less or equal to $r$~\cite{Giampaolo2011}.

This quantity is a proper entanglement monotone~\cite{Uyanik2010}. Moreover, for $r=1$, i.e. in the case of a non-degenerate ground space of the
local interaction term $h_{\ell}$, it coincides with the bipartite
geometric entanglement introduced in Ref.~\cite{Blasone2008}. This is the ground-state entanglement between the local block $\ell$ and the
remainder $R$ of the total many body system $G$ as measured by the Hilbert-Schmidt distance $D_{HS}$ of the pure ground state
$\rho_G = \ket{\Psi}_{G}\bra{\Psi}_{G}$ from the set of the pure bi-separable states $\ket{\Psi}_{bs}$ defined as
$\ket{\Psi}_{bs} \equiv \ket{\psi}_{\ell} \otimes \ket{\phi}_R$, so that
\begin{equation}
\label{geometricblockentanglement}
\epsilon^{(1)}_{\ell} = \min_{ \{ {\ket{\Psi}}_{bs}\} } D_{HS} \left( \ket{\Psi}_{G},\ket{\Psi}_{bs} \right) \, .
\end{equation}
Therefore $\epsilon^{(1)}_{\ell}$ realizes a quantum lower bound to total frustration in terms of the bipartite geometric entanglement.

The fact that the above measure of frustration is bounded from below by quantum entanglement when the many-body system features a pure,
non-degenerate global ground state is a direct consequence of the fact that for pure states all quantum correlations reduce to entanglement.
On the other hand, mixed states allow for quantum correlations more general than entanglement, such as the quantum discord, that can be present
even in separable states. One needs to keep track of this important difference when considering the general case of degenerate global ground
states. Indeed, in order to introduce a measure of frustration that should not depend on the choice of one particular state among the possible
degenerate ones, we need to introduce a reference global ground state that properly averages over all possible degenerate ground states and
preserves all symmetries of the many-body Hamiltonian.

This can be achieved either by taking a coherent quantum superposition of all the degenerate ground states with equal amplitudes, or by
introducing the maximally mixed global ground state (MMGGS), i.e. the convex combination of all the possible pure ground states with equal
probability weights. In such a mixed state, correlations are not completely characterized by quantum entanglement alone, and the quantity
$\epsilon^{(r)}_{\ell}$ turns into the sum of the ground-state block entanglement between $\ell$ and $R$, as measured by the convex roof of the
pure-state bipartite geometric entanglement, plus the classical correlations that are established between the local subsystem $\ell$ and an
ancillary system when the latter performs a generalized quantum measurement on $\ell$~\cite{Marzolino2013,Koashi2004}.

Relation~(\ref{inequality}) allows for a more refined classification of frustration in the quantum domain than merely distinguishing between
frustrated and non-frustrated systems, as in the classical domain. Namely, we can now introduce the following classification of quantum systems
according to frustration:
\begin{flushleft}
\begin{itemize}
\item[1.] Frustration-free systems (FF): 

$f_{\ell} = \epsilon^{(r)}_{\ell} = 0 \quad \forall \; \ell \; ,$

\item[2.] Inequality saturating systems (INES): 

$f_{\ell} = \epsilon^{(r)}_{\ell} > 0 \quad \forall \; \ell \; ,$

\item[3.] Inequality non-saturating systems (non-INES): 

$f_{\ell} > \epsilon^{(r)}_{\ell} \geq 0 \quad \forall \; \ell \; .$
\end{itemize}
\end{flushleft}

Accordingly, if generic frustration is absent, the system is always in an INES ground state and thus either frustration-free if the ground state
is classically correlated or genuinely quantum frustrated if it is quantum correlated. When the inequality is not saturated (non-INES) the system's
total frustration in general comprises both generic and genuine quantum contributions. Indeed, as we will show in the following, the interplay
between generic frustration and genuine quantum frustration and the transition from non-INES to INES configurations correspond to transitions
from phases with classical-like magnetic order to phases with genuine quantum order.

\section{Frustration and monogamy of entanglement}
\label{Sec:entanglement}

Let us consider translationally-invariant (periodic boundary conditions) many-body system consisting of $N$ \mbox{spin--$1/2$} elementary
constituents, featuring generic two-body (pairwise) interaction terms with the only constraint that they preserve the parity symmetry along the
three independent spin directions, so that the total many-body Hamiltonian reads
\begin{eqnarray}\label{Hamiltonian}
H=\sum_{i,j,\alpha} J_{ij}^\alpha \sigma_i^\alpha \sigma_j^\alpha = \sum_{ij} h_{ij} \; ,
\end{eqnarray}
where $\sigma_i^{\alpha}$ are the Pauli spin operators ($\alpha=x,y,z$) for the $i$-th spin and $J_{i,j}^\alpha$ is the interaction coupling
strength between spins $i,j$ along the $\alpha$ direction. For such Hamiltonians the elementary local interaction subsystems $\ell$ are naturally
identified as the ones made up by pairs of directly interacting spins with local Hamiltonian
$h_{ij}$ whose energy eigenstates are the four maximally entangled Bell states
(singlet and triplet states).

For such Hamiltonians that include, among others, the Ising, $XY$, and Heisenberg models, it is possible to introduce a quantum generalization of
the classical Toulouse criteria for frustration-free systems~\cite{Giampaolo2011,Marzolino2013}. The first step of this construction consists in
defining a prototype model. A quantum spin Hamiltonian of the type Eq.~(\ref{Hamiltonian}) is a {\em prototype} model if there exists
{\em at least} one local ground state common to all local interaction terms $h_{ij}$ and all pairwise couplings are ferromagnetic
($J_{ij}^{\alpha} \leq 0$).
Having defined the prototype model, a quantum version of the Toulouse criteria can be formulated as follows:\\
{\em i) -- The MMGGS of a prototype models is INES for all pairwise interaction terms $h_{ij}$.}\\
{\em ii) -- For all models obtained from a prototype model by local unitary operations and partial transpositions
acting on any arbitrary subsystem, the MMGGS is still INES on all pairwise interaction terms $h_{ij}$.}

The quantum Toulouse criteria allow to make predictions on the frustration of a model without requiring the
complete and exact diagonalization of the system. In fact, they allow to state powerful exact relations between the degeneracy of the local
ground spaces, entanglement, and the presence of quantum frustration in systems described by Hamiltonians of the form Eq.~(\ref{Hamiltonian}).
Indeed, a key feature of a prototype model is that all interacting pairs admit at least one local ground state in common. Then, the following holds:

\begin{theorem*}
If a many body system with pairwise local interactions satisfies the quantum Toulouse criteria and the degeneracy of the ground space common to all
local interacting pairs has dimension $r \geq 2$, then the system is necessarily frustration-free, i.e.:
$f_{\ell} = \epsilon^{(r)}_{\ell} = 0 \; \; \; \; \forall \; \ell \, .$
\end{theorem*}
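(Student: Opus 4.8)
The plan is to establish the strongest possible conclusion, namely that $H$ is genuinely frustration-free, by exhibiting an explicit unentangled global ground state, and then to transport this along the maps that define the Toulouse class. By the quantum Toulouse criteria it suffices to treat the prototype directly: a local unitary acts pair-wise as $U_i\otimes U_j$ and carries any product state $\bigotimes_i\ket{\chi}$ to $\bigotimes_i U_i\ket{\chi}$, which is again a product and again a simultaneous local ground state, so frustration-freeness is manifestly preserved (the partial-transposition case is the delicate one, addressed in the closing remark, where I would rely on the invariance of the defining traces and of the local spectra that underlies criterion ii). For the prototype itself the couplings are ferromagnetic, $J_{ij}^\alpha\le 0$, and a direct evaluation of $h_{ij}$ on the four Bell states shows that the singlet $\ket{\Psi^-}$ is the unique highest-energy eigenstate, with energy $|J^x_{ij}|+|J^y_{ij}|+|J^z_{ij}|$ strictly above all three triplet energies for any nontrivial coupling. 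Hence the ground space of every pair, and therefore the common ground space $\mathcal{G}$ obtained by intersecting them, lies entirely inside the three-dimensional symmetric (triplet) sector.

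The key step, and the real content of the theorem, is a geometric lemma: every subspace of dimension $r\ge 2$ inside the two-qubit symmetric sector $\mathrm{Sym}^2(\mathbb{C}^2)$ contains a product state. I would identify a symmetric two-qubit vector with a symmetric $2\times 2$ matrix $M$, so that a vector is a product state precisely when $\mathrm{rank}(M)=1$, in which case symmetry forces the form $\chi\chi^T$, i.e. $\ket{\chi}\otimes\ket{\chi}$. Choosing two linearly independent $M_1,M_2\in\mathcal{G}$, I would then solve $\det(aM_1+bM_2)=0$; this is a binary quadratic form in $(a,b)$, which always has a projective root over $\mathbb{C}$, and the corresponding element $a_0M_1+b_0M_2$ is nonzero by linear independence, hence of rank exactly one. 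That element is the sought symmetric product state, so $\mathcal{G}$ contains some $\ket{\chi}^{\otimes 2}$.

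With this in hand the conclusion is immediate. By translational invariance the uniform product state $\ket{\Psi_0}=\ket{\chi}^{\otimes N}$ restricts on every interacting pair to $\ket{\chi}^{\otimes 2}\in\mathcal{G}$, so it minimizes each $h_{ij}$ simultaneously and attains energy $\sum_{ij}E^{\min}_{ij}$, the absolute lower bound on $\langle H\rangle$; thus $H$ is frustration-free. Moreover, since each $h_{ij}-E^{\min}_{ij}\ge 0$, every global ground state $\ket{\Psi}$ must lie in the kernel of each of these operators, hence in $\mathcal{G}\otimes\mathcal{H}_R$, so its reduced pair state $\rho_\ell$ is supported on $\mathcal{G}=\mathrm{range}(\Pi_\ell)$, giving $\mathrm{Tr}\{\Pi_\ell\rho_\ell\}=1$ and $f_\ell=0$. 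By convexity the same holds for the a priori equiprobable mixture, the MMGGS. Finally, $0\le\epsilon^{(r)}_\ell\le f_\ell=0$ by Eq.~(\ref{inequality}), so $\epsilon^{(r)}_\ell=0$ as well, establishing $f_\ell=\epsilon^{(r)}_\ell=0$ for all $\ell$.

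The main obstacle is the geometric lemma of the second paragraph: it is precisely the degeneracy condition $r\ge 2$, together with the confinement of the ground space to the symmetric sector, that guarantees a product state inside the common local ground space and hence an unentangled global ground state; for $r=1$ no such product state need exist and genuine quantum frustration can survive. A secondary subtlety is the partial-transposition half of the reduction: because partial transposition does not preserve positivity, one cannot literally transport the global state, and one must instead invoke the invariance of the traces $\mathrm{Tr}\{\Pi_\ell\rho_\ell\}$ and of the local spectra that is encoded in criterion ii, rather than a naive state transformation.
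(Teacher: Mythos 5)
Your proof is correct and follows the same overall strategy as the paper's: locate a product state inside the degenerate common local ground space, promote it to a fully factorized global ground state, and conclude $f_\ell=0$ for every global ground state (hence for the MMGGS by convexity) and $\epsilon^{(r)}_\ell=0$ via Ineq.~(\ref{inequality}). Where you add value is in proving the existence step that the paper only asserts: the paper states that superpositions of two Bell states ``include separable ones'' without argument, whereas your identification of symmetric two-qubit vectors with symmetric $2\times2$ matrices and the projective root of $\det(aM_1+bM_2)$ actually establishes it. More importantly, your observation that the ferromagnetic prototype confines the ground space of each pair to the triplet sector forces the product state to be of the uniform form $\ket{\chi}\otimes\ket{\chi}$; this matters because the paper's $|\Psi_{fact}\rangle=\bigotimes_i|i\rangle$ implicitly needs a globally consistent single-site assignment, and a non-symmetric product state $|a\rangle\otimes|b\rangle$ with $a\neq b$ would require a two-colouring of the interaction graph, which fails on odd cycles. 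You also correctly flag the partial-transposition half of the Toulouse class as the delicate point (there the local ground space can leave the symmetric sector; the trace identity $\mathrm{Tr}\{h^{T_j}\rho^{T_j}\}=\mathrm{Tr}\{h\rho\}$ lets one conjugate $\ket{\chi}$ on the transposed sites and recover a consistent product ground state, since full transposition of a pair term leaves it invariant). Two minor caveats: the singlet is not \emph{strictly} above all three triplet levels when two of the couplings vanish --- it can be degenerate with one triplet at the top of the spectrum --- but it still never enters the ground space unless $h_{ij}=0$, which is all your argument needs; and your lemma as written treats $r=2$, with $r=3$ following a fortiori (a three-dimensional subspace of the symmetric sector is the whole sector, which contains $\ket{\chi}^{\otimes2}$) and $r=4$ being the trivial case $h_{ij}\propto\mathbbm{1}$, matching the paper's closing remark that the same argument covers $r=3,4$.
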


\begin{proof}
If a model satisfies the quantum Toulouse criteria and the common ground space has dimension $r=2$, then all interacting pairs admit as common
ground state two out of the four maximally entangled Bell states. Obviously any linear coherent superposition of these two Bell states is again a
legitimate ground state of each local interacting pair. All possible superpositions include separable ones of the form $|i\rangle \otimes |j\rangle$.
Hence, the fully factorized state $|\Psi_{fact} \rangle = \bigotimes_{i=1}^N |i\rangle$ is a legitimate ground state of the total many-body
Hamiltonian, in close analogy to the situation that occurs at the factorization points of many-body Hamiltonians in transverse
fields~\cite{fact1,fact2,fact3}. By immediate inspection, the ground state $|\Psi_{fact} \rangle$ has vanishing frustration, single-site
entanglement, and
pair block entanglement for all interacting pairs. Moreover, since the total energy associated to $|\Psi_{fact} \rangle$ is the sum of the minimum
energies of all local pair interaction terms, any possible other total ground state must feature the same property and hence its projection onto any
pair of interacting spins must belong to the pair local ground space, implying that its frustration vanishes on all pairs. Exactly the same argument
applies also to the cases of local ground space degeneracy $r=3$ and $r=4$.
\end{proof}


We are thus left with the only non-trivial case of a local ground space degeneracy $r=1$, corresponding to local pairs of interacting spins
admitting one of the four Bell states as a local ground state. In such a case, the quantum lower bound to frustration always coincides with the
ground-state geometric entanglement between the local block (pair) $\ell=\{i,j\}$ and the rest \mbox{$R = G \setminus \ell$} of the system. Dropping
from now on the index $r=1$, it reads
\begin{equation}
\label{geoment}
\epsilon_{\ell} = 1 - \lambda_{max} \; ,
\end{equation}
where $\lambda_{max}$ is the largest eigenvalue of the reduced local density matrix $\rho_{\ell}$. From now on, without loss of generality, we will
consider the case of anti-ferromagnetically interacting pairs of spins admitting as unique, non-degenerate ground state the antisymmetric singlet
state: $|\psi^{ij}\rangle=\frac{1}{\sqrt{2}}\left( |\uparrow_i \downarrow_j \rangle - |\downarrow_i \uparrow_j  \rangle \right)$. The corresponding
projector onto the local ground space then reads
\begin{equation}
\label{localgs}
 \Pi_{\ell=ij} =|\psi^{ij}\rangle\langle \psi^{ij}| \; .
\end{equation}
Exploiting the symmetry under parity along the three spin directions, the two-body reduced local density matrix takes the general form
\begin{equation}
\label{reduceddensity}
\rho_{\ell=ij}=\frac{1}{4}  \mathbbm{1}_{ij}+ \sum_{\alpha}  g^\alpha_{ij}\; \sigma_i^\alpha \sigma_j^\alpha \; .
\end{equation}
Here $g^\alpha_{ij}$ denote the spin-spin correlation functions: $g^\alpha_{ij}=\langle S_i^\alpha S_j^\alpha\rangle$,
where $S_i^\alpha=(1/2)\; \sigma_i^\alpha$ are the spin operators and $\langle \cdot \rangle$ indicates the expectation value over the ground
state of the system. Consequently, the frustration of each interacting local spin pair reads
\begin{eqnarray}
\label{frustration}
f_{ij}& =&\frac{3}{4}+ \sum_\alpha g_{ij}^\alpha\; ,
\end{eqnarray}
and the quantum lower bound $\epsilon_{ij}$ on the total frustration $f_{ij}$ of each local interacting pair reads
\begin{eqnarray}
\label{epsilon1}
\epsilon_{ij} & = & \frac{3}{4}+\min \left\{ g_{ij}^x+g_{ij}^y+g_{ij}^z , g_{ij}^x-g_{ij}^y-g_{ij}^z \; , \right. \nonumber \\
& & \left. \quad \quad -g_{ij}^x+g_{ij}^y-g_{ij}^z \; , -g_{ij}^x-g_{ij}^y+g_{ij}^z \right\} .
\end{eqnarray}
Since the eigenvectors of $\rho_{\ell=ij}$ are the four Bell states we can establish a direct connection between the total frustration $f_{ij}$,
the quantum lower bound $\epsilon_{ij}$, i.e. the ground-state entanglement between the local interacting pair $ij$ and the remainder of the
systems, and the entanglement of formation between the spins $i$ and $j$ in the pair, as measured by the concurrence
$\mathcal{C}_{ij}$~\cite{Hill1997,Wootters1998}. We have that the following inequality holds:
\begin{equation}
\label{epsconcurrence}
 \mathcal{C}_{ij}=\max(0,1-2\epsilon_{ij}) \ge \max(0,1-2 f_{ij}) \; ,
\end{equation}
where $\mathcal{C}_{ij}$ is the concurrence of the spin pair $\{i,j\}$ and we have exploited the explicit analytic form that it takes on the set of
the four Bell states. Equality holds only when the frustration is INES, i.e. when $f_{ij}=\epsilon_{ij}$. Therefore, saturation of the inequality
implies a direct correspondence between quantum frustration, a global feature of the pair as measured by its block geometric entanglement with
respect to the remainder of the entire many-body system, and the local quantum entanglement within the pair. The more frustrated the subsystem
$\{i,j\}$ is with respect to the remainder of the many-body system, the weaker is the internal entanglement between spins $i$ and $j$ forming the
pair. Moreover, since the quantum lower bound
$\epsilon_{ij}$ is the entanglement between the local pair and the remainder of the many-body system if and only if the global ground state is pure,
from Ineq.~(\ref{epsconcurrence}) it follows for all pure global ground states that the stronger the entanglement between spins in the
local pair, the weaker is the entanglement of the pair with the remainder of the many-body system.

These relations allow for a restatement of frustration in terms of entanglement shareability. Indeed, exploiting the
Coffman-Kundu-Wootters-Osborne-Verstraete monogamy inequality~\cite{Coffman2000,Osborne2006} that sets precise constraints on the entanglement that
can be shared among many spins, we find the following {\em ``frustration monogamy''}:
\begin{eqnarray}
\label{monogamy}
\sum_{j \in L_i}\left(\max(0,1-2 f_{ij}) \right)^2 & \leq & \sum_{j \in L_i} \mathcal{C}_{i,j}^2 \nonumber \\
& \leq &  \sum_{j \in G \setminus i} \mathcal{C}_{i,j}^2 \leq \tau_i \; .
\end{eqnarray}
In the above, $L_i$ denotes the subset of spins that interact with spin $i$, while $G \setminus i$ denotes the total spin system without spin $i$.
The upper bound $\tau_i$ is the tangle~\cite{Coffman2000,Osborne2006}. When the global ground state is non degenerate,
it coincides with the squared concurrence measuring the single-site entanglement of spin $i$ with
all other spins in the system. For spin systems described by Hamiltonians of the form Eq.~(\ref{Hamiltonian})
the tangle takes the expression
$\tau_i = 1 - m_x^2 - m_y^2 - m_z^2$, where $m_\alpha = \langle \sigma_i^\alpha \rangle$ is the local magnetization along the $\alpha$ direction.
Therefore $\tau_i = 1$ in the MMGGS, since the latter preserves all symmetries of the total many body Hamiltonian and therefore does not allow for a
spontaneous magnetization along any spin direction. This fact and the frustration monogamy
Ineq.~(\ref{monogamy}) imply exact bounds on the {\em minimum} possible amount of frustration in a many-body system. Considering for simplicity the
uniform case $f_{ij} = f \; \; \forall \{i,j\}$, and denoting by $N_{L}$ the number of spins interacting with spin $i$, we have:
\begin{equation}
\label{minimumfrustration}
f \geq \frac{1}{2} \left( 1 - \frac{1}{\sqrt{N_{L}}} \right) \; .
\end{equation}
We thus find that the minimum frustration threshold varies from $f_{min} = 0$ for $N_{L} = 1$ for a many-body system consisting in just a single
pair of interacting spins, to $f_{min} = 1/2$ at the thermodynamic limit of fully connected many-body systems, i.e. with $N_L \rightarrow \infty$.

\section{Frustration and dimerization}

\label{Sec:transition}

The frustration $f_{ij}$ of the local pair interaction $h_{ij}$ is in general a function of the parameters entering in the total many-body
Hamiltonian. Moreover, the interplay between different sources of frustration may lead to significant changes in the structure of the global
many-body ground state. In particular, we wish to investigate the changes in the structure of quantum ground states when frustration turns from
non-INES to INES, i.e. when genuine quantum frustration overrules completely other generic sources of frustration. We will focus on the
paradigmatic class of frustrated one-dimensional models with competing nearest-neighbor (NN) and next-to-nearest-neighbor (NNN)
anti-ferromagnetic interactions, the so-called $J_1 - J_2$ models~\cite{TransitionJ1J2}:
\begin{eqnarray}
\label{XXZnew}
\!\!H \!\! &\!=\!&\! \!\cos\!\phi \!\sum_{i} \cos\!\delta\!\left( S_i^x S_{i+1}^x\! +\! S_i^y
S_{i+1}^y \right)\! + \sin\!\delta S_i^z \!S_{i+1}^z \nonumber \\
\!\!\!\!&\!+\! &\! \!\sin\!\phi\; \!\!\sum_{i} \cos\!\delta\!\left( S_i^x S_{i+2}^x\! +\! S_i^y
S_{i+2}^y \right)\! + \sin\!\delta  S_{i}^z\! S_{i+2}^z  \; .
\end{eqnarray}
These systems are characterized by the ratio between NNN and NN interaction strengths $J_2/J_1 = \tan \phi$, with $\phi\in[0,\frac{\pi}{2}]$),
and the ratio between  anisotropies $\tan \delta$, with $\delta \in[0,\frac{\pi}{2}]$.
The translationally invariant case (periodic boundary conditions) is studied by exact diagonalization with $N=24$ spins.
In Fig.~\ref{ineszone} we report the behavior of the pair frustration $f_{ij}$, according to the classification induced by
Ineq.~(\ref{inequality}), as a function of the spin-spin correlation functions, where we have exploited
Eqs.~(\ref{reduceddensity}), (\ref{frustration}), and (\ref{epsilon1}), and the symmetry $g_{ij}^x = g_{ij}^y$. Both non-INES
(generic) and INES (genuine quantum) frustrations are realized in a significant range in the parameter space.

\begin{figure}
\includegraphics[width=6.5cm]{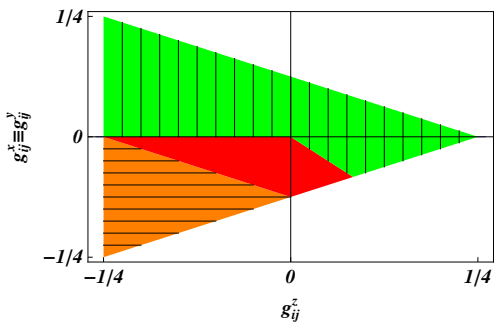}
\caption{(Color online) Behavior of Ineq.~(\ref{inequality}) for the local pair frustration as a function of the admissible values of the
correlation functions $g_{ij}^x = g_{ij}^y$ and $g_{ij}^{z}$ in the reduced local pair density matrix $\rho_{\ell=ij}$ of an $XXZ$ model with
competing NN and NNN interactions. Vertically striped green area: reduced states $\rho_{\ell=ij}$ with non-INES frustration (strict inequality,
generic frustration). Horizontally striped orange area: reduced states with INES frustration (inequality saturated, genuine quantum frustration).
Red area: disentangled reduced states with INES frustration (inequality saturated, vanishing pair concurrence, enhanced genuine quantum frustration).
Remaining white area: non-physical states.}
\label{ineszone}
\end{figure}

In Fig.~\ref{MG} we report schematically the structure of the quantum ground states for models with competing NN ($J_1$) and NNN ($J_2$)
interactions, in the class of Eq.~(\ref{XXZnew}). Exact ground-state dimerization is realized at the Majumdar-Ghosh point $\phi_{MG}
\simeq 0.45$ solution of $\tan\phi = 1/2$~\cite{MajumdarGhosh}. This is the only integrable point of models in the class Eq.~(\ref{XXZnew}).
At this point the ground state becomes doubly degenerate, breaking explicitly the translational invariance of the lattice, which is recovered at
double lattice spacing $2a$. The two ground states are tensor products of NN spin singlets, realizing the first and simplest paradigmatic instance
of valence bond states~\cite{Anderson,Dimer,AKLT}. Except for the Majumdar-Ghosh point, the global ground state is always unique. At $\phi_{MG}$
frustration is evaluated in each of the two dimerized ground states as well as in the MMGGS, i.e. their convex combination with equal weights.

\begin{figure}
\includegraphics[width=6.5cm]{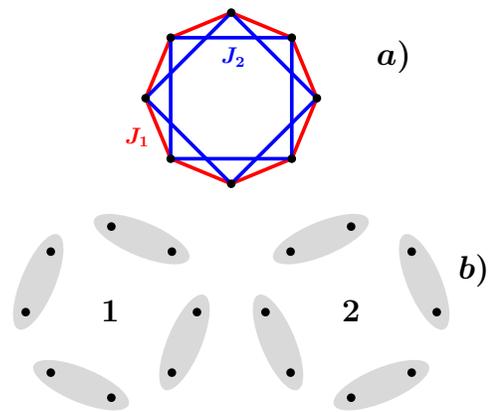}
\caption{(Color online) a) Schematic structure of a one-dimensional quantum spin model with competing NN ($J_1$) and NNN ($J_2$) anti-ferromagnetic
interactions (periodic boundary conditions). b) Ground-state structure at the integrable Majumdar-Ghosh point $J_2/J_1 = 1/2$. At this point the
model features two exactly dimerized ground states that are the tensor product of NN spin singlets. The two valence bond solids break
the translational invariance of the Hamiltonian with respect to the lattice spacing $a$.}
\label{MG}
\end{figure}

In Figs.~\ref{figfrust_r_1} and \ref{figfrust_r_2} we report the behavior of the pair frustration $f_{ij}$, the quantum lower bound
$\epsilon_{ij}$, and the pair concurrence $C_{ij}$ as functions of $\phi$, respectively for the NN ($i,i+1$) and the NNN ($i,i+2$) local pair
interaction terms. The plots are taken at anisotropies $\delta=0$ ($XX$ symmetry), $\frac{\pi}{6}$ ($XXZ$ symmetry), and $\delta=\frac{\pi}{2}$
(classical Ising symmetry). The fully isotropic Heisenberg case ($XXX$) realized by $\delta = \frac{\pi}{4}$ is reported in
Fig.~\ref{figfrust_Heisenberg} for which we provide also insets comparing the detailed behavior of the pair frustrations in the MMGGS and in the
two degenerate dimerized pure ground states at the Majumdar-Ghosh point $\phi_{MG}$.

Remarkably, from Figs.~\ref{figfrust_r_1} and \ref{figfrust_r_2}, we see that for all quantum models ($XX$, $XXZ$, and Heisenberg)
it is always $f_{i,i+1} = \epsilon_{i,i+1}$, i.e. frustration of the NN interacting local pair is always INES (genuine quantum), notwithstanding
the presence of a source of frustration, due to geometry, in terms of competing interactions on different length scales.

The NN pair frustration increases for increasing relative strength of the NNN
interactions, reaching asymptotically the maximum value $3/4$. This coincides with the maximum possible value of $\epsilon_{i,i+1}$, i.e.
the maximum possible value of the ground-state block entanglement between the NN pair and the rest of the system. Such a maximum is achieved for a
reduced pair density matrix $\rho_{i,i+1}$ with all four eigenvalues degenerate and equal to $1/4$.

In other words, maximizing the relative strength of the NNN interaction maximizes the ground-state block entanglement $\epsilon_{i,i+1}$
between NN pairs and the rest of the system and the corresponding (genuine quantum) NN pair frustration $f_{i,i+1}$. Hence, by monogamy, it
minimizes the pairwise entanglement $C_{i,i+1}$ between the spins forming the NN interacting local pair. Indeed, for a sufficiently large value
of the ratio $J_2/J_1$,
the NN pair concurrence $C_{i,i+1}$ vanishes exactly and remains zero thereafter.

\begin{figure}
\includegraphics[width=8.cm]{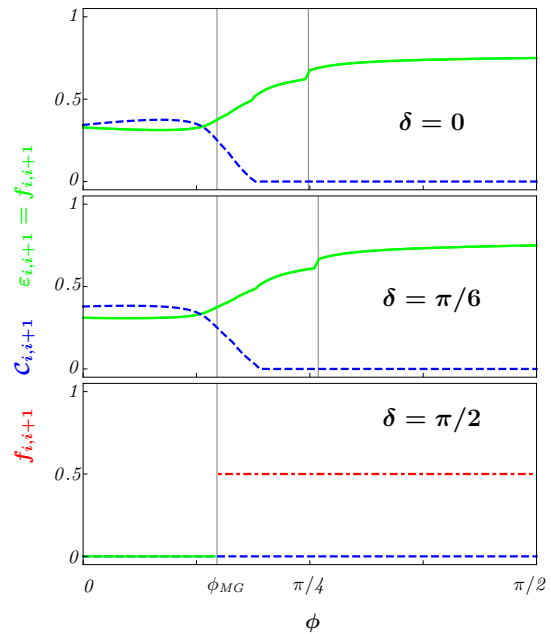}
\caption{(Color online) Frustration $f_{i,i+1}$ of the NN interacting local pair, quantum lower bound
$\epsilon_{i,i+1}$, and NN pair concurrence $C_{i,i+1}$, as functions of the parameter
$\phi$ ruling the ratio $\tan \phi$ between NN and NNN interactions. Upper panel: $J_1 - J_2$ model with $XX$ symmetry, $\delta = 0$. Solid green
line: $f_{i,i+1} = \epsilon_{i,i+1}$, as the two quantities in this case coincide $\forall \phi$. Dashed blue line: $C_{i,i+1}$. Central panel:
$J_1 - J_2$ model with $XXZ$ symmetry, $\delta = \pi/6$. Also in this case $f_{i,i+1} = \epsilon_{i,i+1} \, \forall \phi$. Both are denoted by the
solid green line. Dashed blue line: $C_{i,i+1}$. Lower panel: $J_1 - J_2$ model with Ising symmetry, $\delta= \pi/2$. Dot-dashed red line:
$f_{i,i+1}$ between the Majumdar-Ghosh point $\phi_{MG}$ and $\phi = \pi/2$. Solid green line: $f_{i,i+1} = \epsilon_{i,i+1} = 0$ between the
Majumdar-Ghosh point $\phi_{MG}$ and $\phi = 0$. Dashed blue line: $C_{i,i+1}$. NN frustration is always INES, $f_{i,i+1} = \epsilon_{i,i+1}$,
for the quantum $XX$ and $XXZ$ models. In the regime of dominating NNN interactions, $f_{i,i+1} \rightarrow 3/4$, which is the maximum possible
block entanglement $\epsilon_{i,i+1}$ between the NN interacting local pair and the rest of the system. The leftmost vertical line corresponds to
the Majumdar-Ghosh point $\phi_{MG}$. The rightmost vertical lines correspond to the boundary value $\phi_b$ of the intermediate quantum regime:
$\phi_b \simeq 0.99$ and $\phi_b \simeq 1.08$, respectively for the $XX$ and the $XXZ$ symmetries. Models with Ising symmetry feature only a
frustration due to geometry. It undergoes a sharp transition from $f_{i,i+1} = 0$ to $f_{i,i+1} = 1/2$ exactly at $\phi_{MG}$ and remains constant
thereafter. The quantities $\epsilon_{i,i+1}$ and $C_{i,i+1}$ are always vanishing in the Ising case.}
\label{figfrust_r_1}
\end{figure}

Even more important, Figs.~\ref{figfrust_r_1} and \ref{figfrust_r_2} show that frustration $f_{i,i+2}$ of the NNN interacting local pair undergoes
a full transition from non-INES (generic frustration) to INES (genuine quantum frustration) exactly at the Majumdar-Ghosh point $\tan \phi =1/2$.
Therefore, transition to exact ground-state dimerization corresponds to exact transition from non-INES to perfect INES frustration of the NNN
local interaction terms. For increasing relative strength of the NNN interactions the NNN pair frustration decreases, reaching asymptotically its
minimum (whose actual value depends on $\delta$).
Such transitions in frustration exhibit universal features in the $XX$, $XXZ$, and Heisenberg cases, irrespective of the corresponding
different symmetry classes.

As shown in Figs.~\ref{figfrust_r_1} and \ref{figfrust_r_2}, the frustration pattern differs radically only in the Ising case, as should be
expected, since all local interactions commute in Ising systems, entanglement vanishes, and, as discussed in Section~\ref{Sec:entanglement},
there are no genuine quantum sources of frustration. In this latter case we have simply two well distinguished regimes separated by the
Majumdar-Ghosh point $\phi_{MG}$. Indeed, at this point the NN
frustration jumps from zero to half the maximum value $1/2$ and remains constant thereafter, while the NNN frustration jumps from the
maximum value $1$ to a vanishing value and remains zero thereafter.

\begin{figure}
\includegraphics[width=8.cm]{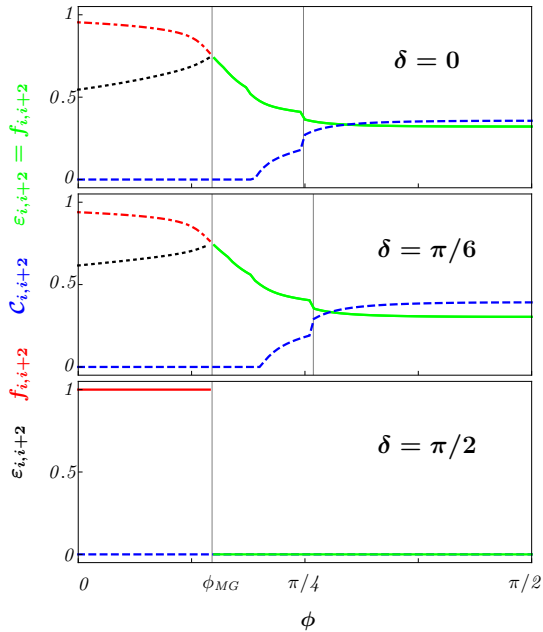}
\caption{(Color online) Same as in Fig.~\ref{figfrust_r_1} but this time for the frustration $f_{i,i+2}$, quantum lower bound $\epsilon_{i,i+2}$,
and concurrence $C_{i,i+2}$ of the NNN interacting local pairs. Transition to genuine quantum frustration (non-INES to INES),
$f_{i,i+2} = \epsilon_{i,i+2}$, occurs exactly at the Majumdar-Ghosh point. Frustration remains INES thereafter. In the regime of dominating NNN
interactions, $f_{i,i+2} \rightarrow 1/4$, which is the minimum possible block entanglement $\epsilon_{i,i+2}$ between the NNN interacting local
pair and the rest of the system. Models with Ising symmetry feature only frustration due to geometry. It undergoes a sharp transition from
$f_{i,i+2} = 1$ to $f_{i,i+2} = 0$ exactly at the Majumdar-Ghosh point and remains vanishing thereafter.}
\label{figfrust_r_2}
\end{figure}

On the other hand, in the nontrivial quantum cases with non-commuting local interactions ($XX$, $XXZ$, Heisenberg) we observe a rich structure
with three different patterns of frustration as the ratio of NNN to NN interactions varies. At small $\phi$ well below the Majumdar-Ghosh point
$\phi_{MG}$ (dominating NN interactions) NN spin pairs are weakly frustrated in opposition to NNN pairs that tend to be strongly frustrated.
Indeed, for $\phi \rightarrow 0$ the NNN pair frustration achieves values very close to unity, i.e. the maximum possible value, which is achieved
exactly in the classical Ising case. For $\phi$'s far above the Majumdar-Ghosh point $\phi_{MG}$ (dominating NNN interactions) the situation
reverses: NNN pairs become weakly frustrated while NN pairs become strongly frustrated.

\begin{figure}
\includegraphics[width=8.cm]{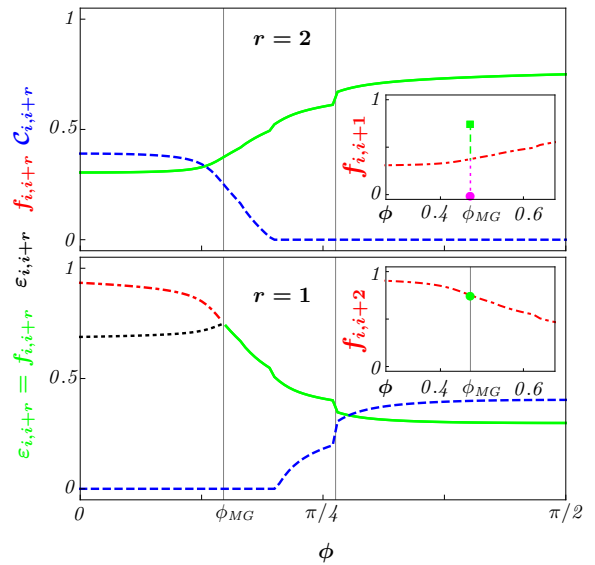}
\caption{(Color online) Same as in Figs.~\ref{figfrust_r_1} and Fig.~\ref{figfrust_r_2} but this time for $J_1 - J_2$ models with Heisenberg
($XXX$) symmetry, $\delta = \pi/4$. In this case the boundary value of the intermediate quantum region is $\phi_b \simeq 1.18$. Insets:
zoom of the behavior of the NN and NNN pair frustrations at and around the Majumdar-Ghosh point. Away from $\phi_{MG}$ the pure global ground
state is unique and thus always coincides with the MMGGS. Exactly at $\phi_{MG}$, besides the averaged ones in the MMGGS, we also report the
non-averaged NN and NNN pair frustrations in the two exactly dimerized pure global ground states. The NN pair frustration
$f_{i,i+1}$ vanishes in the ground state for which $\{i,i+1\}$ belong to the same dimer and are thus maximally entangled between themselves
(purple full circle). Vice versa, it coincides with the block entanglement $\epsilon_{i,i+1}$ at its maximum value $3/4$ in the ground state for
which
$\{i,i+1\}$ belong to different dimers (green full square). The NNN pair frustration $f_{i,i+2} = 3/4$ (green full circle) in both ground states,
since $\{i,i+2\}$ always belong to different dimers.}
\label{figfrust_Heisenberg}
\end{figure}

These two limiting regimes are characterized by two simple antiferromagnetic orders that essentially coincide with those of the classical Ising case. Fig.~\ref{lattices} summarizes this behavior pictorially in terms of the lattice structure. In the limit $\phi \rightarrow 0$ the lattice reduces to a single chain with lattice spacing $a$ and NN anti-ferromagnetic interactions, allowing for a simple anti-ferromagnetic order. In the limit $\phi \rightarrow \pi/2$ the lattice effectively splits into two chains, each with lattice spacing $2a$ and NN anti-ferromagnetic interactions. Two simple anti-ferromagnetic orders are then established on each chain.

However, in the quantum models a further intermediate regime is established in the central interval starting at the Majumdar-Ghosh point and extending to the right, i.e. with increasing values of $\phi$. In this interval, we observe a smooth crossover in terms of a succession of small but sizable step-jumps in the behavior of the NN and NNN pair frustrations. This intermediate, central regime corresponds to the onset of a quantum order that is well distinguished from the two classical (Ising-like) anti-ferromagnetic regimes discussed above. The lower and upper values of this interval are, respectively, the Majumdar-Ghosh point and the boundary point $\phi_b$ corresponding the last step-jump in the increasing behavior of $f_{i,i+1}$ and decreasing behavior of $f_{i,i+2}$.

The fact that a sharp change in the behavior in the frustration properties appears exactly at the Majumdar-Ghosh point may seem at first
surprising, since the transition to the dimerized phase in the $J_1 - J_2$ model, regardless of the anisotropy, occurs at a critical value
$\phi=\phi_c$ much lower than $\phi_{MG}$. For example, in the case of the fully isotropic Heisenberg model, $\delta=\pi/4$, the critical value
$\phi_c \simeq \tan^{-1}(0.2411)\simeq 0.237$, as determined by extensive numerical investigations based on the density matrix renormalization
group (DMRG) and other numerical algorithms~\cite{TransitionJ1J2}. However, the frustration does not feature any sharp change when crossing
$\phi_c$.

The smooth behavior of the frustration at $\phi_c$ and the appearance of step-jumps in the frustration when the system is in the intermediate quantum regime deserve further comments. Let us first consider the physical origin of the step-jumps. One might conjecture that they are due to finite-size effects and that they should disappear when considering much larger chains. In fact, this is not the case, as can be already understood by considering chains of relatively small size. Indeed, by looking at the structure of the ground state in correspondence to the occurrence of the step-jumps, we have verified by analyzing the numerical data obtained from exact diagonalization that at these points exact crossovers are realized between ground states of different parity. These crossovers are then responsible for the sudden increasing of the NN frustration and the corresponding sudden decreasing of the NNN frustration that are observed in the interval $[\phi_{MG}, \phi_b]$.

This structure of ground-state crossovers appears similar to the one occurring in $XY$ chains in transverse field $h$, for values of $h$ below the factorizing field $h_f$~\cite{fact1,fact2,fact3}. Indeed, one can draw a very close parallel between the two-site factorization of the ground state of the $J_1 - J_2$ model at the Majumdar-Ghosh exact dimerization point $\phi_{MG}$, and the single-site factorization of the ground state of the $XY$ model at the exact factorizing field $h_f$. Firstly, in both cases, the sharp change in the behavior of various physical quantities, including frustration and ground-state entanglement, that occurs exactly at the respective factorization points (single-site factorization in the $XY$ model at $h_f$, two-site factorization in the $J_1 - J_2$ model at $\phi_{MG}$), is always present, independently of the size of the system~\cite{fact3,finitesize1,finitesize2}. Moreover, in the finite-size $XY$ chain, the crossovers occur between ground states of different parity and appear for $0 < h < h_f$. Such crossovers are always present, but their number and position depend on the size of the system~\cite{fact3}. The crossovers terminate exactly at $h = 0$ and $h = h_f$. At each crossover, several physical quantities, including frustration and ground-state entanglement, feature step-jumps completely analogous to the ones that are featured in the central quantum regime of the finite-size $J_1 - J_2$ model, i.e. in the region $\phi_{MG} < \phi < \phi_b$. The analogy is complete by observing that the crossovers terminate exactly at $\phi = \phi_{MG}$ and $\phi = \phi_b$, and that both exact ground state factorization and exact ground state dimerization are not associated to any quantum phase transition. Indeed, as already mentioned, both the Majumdar-Ghosh point in the $J_1 - J_2$ model and the factorization point in the $XY$ model occur independently of the size of the system. Moreover, both points are always associated to a double degeneracy of the global ground state.

In the $XY$ model, where all quantities can be evaluated analytically and controlled precisely both at finite-size and in the thermodynamic limit, it is possible to verify exactly that the number of the step-jumps increases while the height of the jumps decreases as the size of the chain increases; in the thermodynamic limit the height of the jumps vanishes exactly and a fully analytic behavior is restored. Unfortunately, the non-solvability of the $J_1 - J_2$ model prevents us, at least at the moment, to establish whether the central quantum region extends indefinitely, i.e. the boundary point $\phi_b$ increases as the size of the chain diverges, and whether analyticity is restored in the thermodynamic limit.

Concerning the smooth behavior of the frustration observed at $\phi_c$ in small chains, we remark that the dramatic change to {\em exact} ground state dimerization occurs at and only at $\phi_{MG}$. Furthermore, our investigation relies on exact diagonalization, which is feasible only for chains of rather limited size. To address larger sizes, one would need to resort to DMRG and related algorithms such as matrix product states (MPS). However, the existing codes are not suitable, as they stand, for the evaluation of the frustration measure. The first stumbling block is that the DMRG algorithm needs profound modifications and improvements in order to compute reduced density matrices and projectors onto the local ground spaces, their eigenvectors, and their spectra, in the presence of periodic boundary conditions. The second stumbling block is that, while DMRG allows to evaluate without much effort the correlation functions between neighboring sites, it works much less efficiently when one needs to compute correlation functions between non-neighboring spins in the chain. Unfortunately, exactly such correlation functions are those needed in order to evaluate, for instance, the static structure factor. These difficulties
have motivated us, whenever direct exact analytical evaluation proved to be impossible, to resort to exact diagonalization algorithms based on
augmented Lanczos methods.

In case of sufficiently large chains, the frustration measure is indeed expected to detect the transition to approximate dimerization by showing an inflection at $\phi_c$, in the same way as it does in exactly solvable models that we are currently verifying, such as the 1-D $XY$ model, or the 1-D Kitaev and cluster-Ising models of symmetry-protected topological order~\cite{topfrust}. However, in the case of the $J_1-J_2$ model this behavior can become evident only once more powerful numerical techniques are developed in order to evaluate the frustration in chains of sizes much larger than the ones that are allowed by resorting to exact diagonalization.

\begin{figure}
\includegraphics[width=6.5cm]{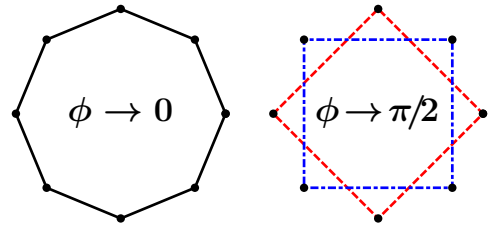}
\caption{(Color online) Lattices effectively realized in the different limits of $\phi$. For $\phi \rightarrow 0$ NNN interactions are strongly
suppressed and
the lattice reduces to a NN interacting chain with lattice spacing $a$. For $\phi \rightarrow \pi/2$ the lattice splits into two independent NN
interacting chains, each with lattice spacing $2a$.}
\label{lattices}
\end{figure}

\section{Observability: structure factor and interferometric visibility}

\label{Sec:observability}

The nature of the intermediate, central {\em quantum regime} in comparison to the two standard anti-ferromagnetic orders is best captured by studying the
behavior of
the static structure factor, as defined below and reported in Fig.~\ref{ssf_figure}. Indeed, as changes in the patterns of quantum frustration
imply a transition between different lattice geometries, we can connect them to the quantum momentum distribution observed in experiments,
which is quantified by the
static structure factor $S_f(k)$:
\begin{equation}
\label{ssf}
S_f(k)=\frac{2}{N} \sum_{i,j=1}^{N} \cos (k a |i-j| ) \langle \overrightarrow{S_i} \cdot  \overrightarrow{S_j}\rangle \; ,
\end{equation}
where $a$ is the lattice spacing, $k$ is the wave vector, and $\overrightarrow{S_i}$ is the spin operator vector of the $i$th whose components
are the three Pauli spin-1/2 matrices. In most experimental situations direct access to the correlation functions for each pair can be extremely
challenging. In such cases an ``averaged'' and ``collective'' information such as the static structure factor, that can be obtained from the
analysis of the visibility which quantifies the contrast of the time-of-flight images~\cite{Gerbier2005,Hoffmann2009}, may provide useful
coarse-grained information on the changes in the quantum orders due to changes in the frustration patterns.

In Fig.~\ref{ssf_figure} we report the behavior of the static structure factor for three different values of $\phi$, respectively well below,
around, and well above the Majumdar-Ghosh point $\phi_{MG} \simeq 0.45$, comparing the quantum $XX$ and Heisenberg models with the classical
Ising case. For $\phi \simeq 0.14$, well below $\phi_{MG}$ quantum and Ising models all features a single peak at $k=\pi/a$, corresponding to the
same anti-ferromagnetic order. At $\phi \simeq 1.4$, i.e. well above $\phi_{MG}$ all models feature two symmetric peaks at $k=\pi/2a$ and
$k=3\pi/2a$, corresponding to two anti-ferromagnetic orders, one for each of the two effective chains of Fig.~\ref{lattices}. These are the
geometric regimes for which classical Ising and quantum $XX$ and Heisenberg orders are indistinguishable.

\begin{figure}
\includegraphics[width=8.cm]{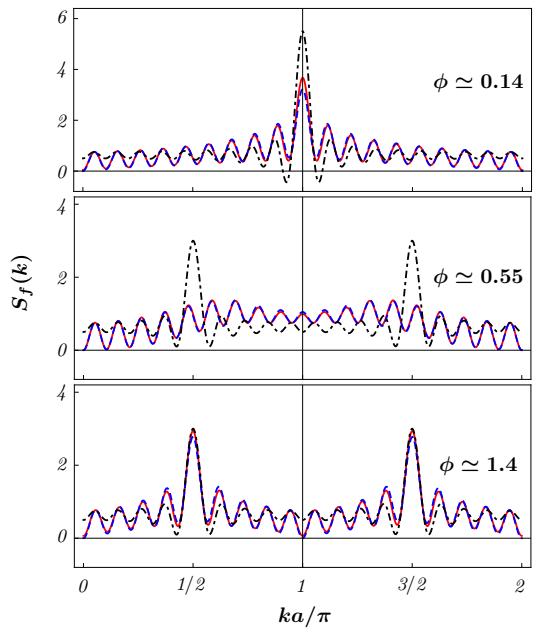}
\caption{(Color online) Behavior of the quasi-momentum distribution $S_f(k)$ as a function of the quasi-momentum $k$ for different values of $\phi$.
Black dot-dashed curve: classical $J_1 - J_2$ model with Ising symmetry ($\delta \pi/2$). Red solid curve: quantum $J_1 - J_2$ model with $XX$
symmetry ($\delta = 0$). Blue dashed curve: quantum $J_1 - J_2$ model with Heisenberg symmetry ($\delta = \pi/4$). Upper panel: dominating NN
interactions ($\phi \simeq 0.14$). The static structure factor features a single peak for all models, corresponding to a simple Ising-like
anti-ferromagnetic order in a simple chain with NN anti-ferromagnetic interactions and lattice spacing $a$. Lower panel: dominating NNN interactions
($\phi \simeq 1.4$). The lattice splits in two sub-chains, each with lattice spacing $2a$. Correspondingly, two symmetric peaks are featured by all
models, corresponding to simple anti-ferromagnetic order on each sub-chain. Central panel: intermediate regime ($\phi \simeq 0.55$). In this regime,
the classical model still features two simple anti-ferromagnetic orders, while the quantum models feature a nontrivial quantum
order and no peaks (the many small secondary peaks are due to finite-size effects that vanish in the thermodynamic limit).}
\label{ssf_figure}
\end{figure}

Finally, for $\phi \simeq 0.55$, a value above but comparable to $\phi_{MG}$, we observe that the quantum models do not feature any peak,
corresponding to a quantum order totally distinguished from the two classical anti-ferromagnetic orders featured by the Ising model on the two
effective chains.

On the other hand, although the analysis of the static structure factor provides useful information about the behavior of the frustration measure,
in general it cannot yield complete information about the relation between the frustration and its quantum lower bound $\epsilon_{ij}$. Indeed,
while the existence of a direct relationship between the static structure factor and the frustration is due to a qualitatively similar dependence of
both quantities on sums of correlation functions, a fact that can be immediately appreciated seen comparing Eq.~(\ref{frustration}) with
Eq.~(\ref{ssf}), a direct connection between correlation functions and $\epsilon_{ij}$ is hindered by the highly nonlinear nature of their mutual
relationship, where the nonlinearity originates from the nontrivial minimization that occurs in Eq.~(\ref{epsilon1}).

As already mentioned, the genuine quantum regime established at the Majumdar-Ghosh point $\phi_{MG}$ extends up to a critical boundary
value $\phi_{b}$ corresponding to the last step-jumps in the pattern of increasing NN pair frustration and decreasing NNN pair frustration, as
reported in Figs.~\ref{figfrust_r_1}, \ref{figfrust_r_2}, and \ref{figfrust_Heisenberg}. The extension of the quantum region is a non-universal
feature, as it increases, as should be intuitively expected, with increasing symmetry of the different quantum models. Indeed, for $\delta = 0$
($XX$), $\phi_b \simeq 0.99$. For $\delta = \pi/6$ ($XXZ$), $\phi_b \simeq 1.08$. Finally, for $\delta = \pi/4$
(Heisenberg), $\phi_b \simeq 1.18$.

Concerning observability, besides measuring the static structure factor in time-of-flight experiments, a complementary
strategy would be the measurement of the visibility in interferometric experiments with atoms~\cite{Greiner2014}. In quantum optics, the
interferometric visibility $V$ by feeding two successive photons from a source into a 50:50 beam splitter is
\begin{equation}
\label{visibility}
V = \mathrm{Tr} \left( \rho_a \rho_b \right) \; ,
\end{equation}
where $\rho_a$ and $\rho_b$ are the states of the two incoming photons.

In fact, it is possible to envisage very general schemes for the estimation and the direct measurement of linear and nonlinear functionals of
quantum states, based on simple quantum networks~\cite{Ekert2002}. Such schemes have been recently investigated and specialized to quantum
simulators realized with atomic ensembles in optical lattices, either in terms of quantum switches~\cite{Abanin2012} or multiparticle atomic
interferometers~\cite{Zoller2012}, and are close to experimental realizability~\cite{Greiner2014}.

\section{Conclusions and outlook}

\label{Sec:conclusion}

In summary, we have shown that a measure of frustration based on the global-to-local incompatibility overlap (global-to-local infidelity)
quantifies efficiently both generic (geometric, common to and equal for classical and quantum systems alike) and genuinely quantum sources of
frustration.

For a large class of Hamiltonians we have proven that the presence of genuine quantum frustration is possible only if the ground states of the
local interaction terms are non degenerate, that is, the rank of the projection operator onto the ground space of the local interacting subsystems
is $r=1$.

In such cases the total frustration can be related, via different upper and lower bounds, with two types of entanglement: the internal (local)
entanglement between the constituents of the local interaction terms, and the block (global) entanglement between the local interaction terms
and the remainder of the total many-body system. Moreover, we have established a ``monogamy of frustration'' relation quantifying the trade-off
on frustration due to the interplay between local and global entanglement.

We have then applied these general results to spin Hamiltonians with competing anti-ferromagnetic interactions on different length scales, thus
featuring simultaneously generic and genuine quantum contributions to the total frustration. We have then shown that the quantum phase
transition to exact ground-state dimerization and valence bond solids corresponds to an exact transition to genuine quantum frustration.

We have investigated the behavior of the total frustration as a function of the physical parameters of a generic class of many-body Hamiltonians.
We have identified three different regimes, two generic ones in which frustration due to geometry dominates and only one of the competing
interaction terms is strongly frustrated while the remaining one is essentially frustration-free, giving rise to phases with a well defined,
classical-like, anti-ferromagnetic order, and an intermediate one in which genuine quantum frustration is non-negligible, the total frustrations
of the competing interaction terms are comparable, and a quantum valence bond solid phase is established with no definite magnetic order.

We have further shown how these different many-body regimes can be clearly identified in an experiment by investigating the behavior of the static
structure factor, a collective observable amenable to direct observation in time-of-flight experiments for quantum spin chains realized by
atom-optical quantum simulators. Finally, we have also briefly discussed the possibility of the direct observation of frustration, as quantified by
our global-to-local incompatibility, by measuring the visibility in experiments with atomic interferometers.

The investigation was carried out for one-dimensional systems with competing nearest neighbor (NN) and next-to-nearest neighbor (NNN) interactions,
but the methodological framework is general and in principle can be applied also to the study of systems with long-range interactions and/or to
higher dimensions and different lattice geometries.

The concepts introduced in the present work might be particularly relevant to the investigation of systems that feature maximally quantum
(nonlocal) and global orders, such as quantum spin liquids~\cite{spinliquids} featuring topological order~\cite{toporder}. Topologically ordered
phases do not break any symmetry of the many-body Hamiltonian, do not admit local order parameters, and feature a robust ground-state topological
degeneracy and a complex pattern of long-range ground-state entanglement. A typical global signature of topological order is a nonvanishing
sub-leading constant contribution to the area-law leading scaling of the ground-state block entanglement, as measured by the von Neumann entropy of
the block reduced density matrix, known as {\em topological entropy}.

On intuitive grounds, the universal measure of frustration, Eq.~(\ref{def_frustratio}), should be perfectly suited for the characterization of
topological order and other maximally quantum orders, as it measures precisely the degree of their incompatibility with the local order induced
by the local interaction terms.

The global-to-local incompatibility overlap is defined in any spatial dimension (at variance with topological entropy, which is not defined for
one-dimensional systems), it is in principle computable in a clearly and unambiguously defined way (topological entropy has to be computed by
complex and extremely delicate subtraction schemes), and is amenable to direct observation by strategies relying on time-of-flight or
interferometric experiments (Von Neumann topological entropy is not directly accessible to experimental observation, and one should rely instead on
the measurement of the two-R\'enyi topological entropy). On this basis, we are currently investigating the scaling behavior of frustration as a
characterization of topological order~\cite{topfrust}.

In general terms, quantum ground states can be classified according to their distinct patterns of long-range entanglement. This provides the most
basic categorization of quantum phases of matter, more fundamental than Landau's symmetry breaking paradigm. It would then be worth pursuing the
investigation of the relation between frustration, topological order, and patterns of genuine multipartite entanglement, along the lines suggested
by some recent studies~\cite{Giampaolo2013,Giampaolo2014,Jindal2014}. Indeed, the observed analogy between two-site dimerization and single-site factorization suggests that $k$-site ground-state factorization might be a useful tool to understand hierarchies of quantum many-body orders in terms of complex patterns of multipartite ground-state entanglement.

{\bf Acknowledgments:} - SMG and FI are grateful to Marcello Dalmonte for valuable discussions. SMG and FI acknowledge the FP7 Cooperation
STREP Project EQuaM - Emulators of Quantum Frustrated Magnetism (GA n. 323714). SMG and BCH acknowledge the Austrian Science Fund (FWF-P23627-N16).

\end{document}